\documentclass{amsart}
\usepackage{amssymb, amsmath}%%% 
\usepackage{graphics}%%% 
\usepackage[dvips]{graphicx}%%% 
\usepackage{eepic}%%% 
\usepackage{epic}%%% 
\usepackage{array}
\usepackage{bm}
\usepackage{exscale}
\usepackage{latexsym}
\newtheorem{thm}{Theorem}[section]

\newtheorem{prp}[thm]{Proposition}

\newtheorem{dfn}[thm]{Definition}
\def\C{{\mathbb C}}
\def\Z{{\mathbb Z}}
\def\Q{{\mathbb Q}}
\def\F{{\mathbb F}}

\def\P{{\mathbb P}}

\def\II{${}_{\mbox{\scriptsize{II} }}$}
\def\I{${}_{\mbox{\scriptsize{I} }}$}

\title{Several results on the almost good reduction property to the $q$-discrete Painlev\'{e} equations over the field of $p$-adic numbers}

\author{Masataka Kanki$^1$ \\
\small $^1$ Faculty of Engineering Science, Kansai University, 3-3-35 Yamate, Osaka 564-8680, Japan}

\date{}

\begin{document}

\maketitle

\begin{abstract}
We report on the recent results in the theory of discrete integrable equations reduced modulo a prime number $p$ based on the previous results \cite{Kanki}. This article mainly concerns the $q$-discrete analogues of the Painlev\'{e} equations. We first give a quick review of the idea of `almost good reduction', which is a property that most integrable systems over a finite field $\mathbb{F}_p$ possess when reduced from those defined over the field of $p$-adic numbers $\mathbb{Q}_p$.
This idea has been introduced in our previous papers and the discrete and $q$-discrete Painlev\'{e} II equations have been proved to have `almost good reduction' \cite{KMTT, JNMP}, which can be interpreted as an arithmetic analogue of the singularity confinement test.
In this article we prove that this property holds for several other $q$-discrete Painlev\'{e} equations. 

The theory of Diophantine integrability \cite{Halburd} is reviewed and its relation to the integrability detection by the algebraic entropy is discussed.
\end{abstract}

\section{Introduction}
%%%%%%%%%%%%%%%%%%%%%%%%%%%%%%%%%%%%%%%%%%%%%%%%%%
The purpose of our research is to define and investigate the discrete integrable equations over finite fields. We wish to study the meaning of integrability over finite fields.

In the case of linear discrete equations, we can well-define the equations over finite fields just by changing the field on which the equations are defined to finite fields. However, in the case of nonlinear equations, since the systems are usually formulated by rational functions, the division by $0$ mod $p$ and indeterminacies such as $0/0$ and $\infty\pm \infty$ frequently appear.
These points makes it difficult for us to well-define the equations over finite fields. Thus there has been few studies on the nonlinear discrete integrable equations defined over finite fields.

There are mainly three approaches to overcome this difficulty.
The first one is to study the equation in the bilinear form.
Bilinear form of the discrete KP and KdV equations have been treated over the finite field \cite{Bialecki,DBK,BD}.
The second one is to restrict the domain of definition of the system so that the indeterminacies do not appear: discrete Toda equation over finite fields and the graphical structures of the solutions have been obtained \cite{ytakahashi}.
The third one is to extend the space of initial conditions to make the mapping well-defined.

We investigate the third approach in this paper and try two different schemes.
The first scheme is to apply the theory of the space of initial conditions developed by K. Okamoto \cite{Okamoto}  and H. Sakai  \cite{Sakai}.
According to the Sakai theory, the domain of definition of the discrete Painlev\'{e} equation becomes a birational surface as we extend the domain by blowing-up at each singular point.
We have shown in our previous work \cite{JNMP}, that this procedure is still valid if applied to the domain of definition $\mathbb{F}_{p^m}\times \mathbb{F}_{p^m}$. 

The second scheme of extension is to define the equations over the field of $p$-adic numbers and then reduce them to the finite fields.
For example, if we try to define the discrete Painlev\'{e} equation over the field $\F_p$,
the initial value space is a finite set $\F_p \times \F_p$. Since the system consists of transitions between just $p^2$ points, it is not clear how  we can formulate the integrability of the system from the integrability of the original system defined over $\mathbb{C}$. To resolve this problem we consider a pair of fields $(\Q_p,\ \F_p)$. We can say that the system over $\F_p$ is `integrable' if it is integrable over $\Q_p$ and its reduction to the finite field $\F_p$ has so-called `almost good reduction' property.
This approach is closely related to the theory of arithmetic dynamical systems, which concerns the dynamics over arithmetic sets such as $\mathbb{Z}$ or $\mathbb{Q}$ or a non-Archimedean field that is of number theoretic interest \cite{Silverman}.
In arithmetic dynamics, the polynomial or rational mappings reduced modulo a prime numbers give significant information on the original systems. 
The mapping is said to have good reduction if, roughly speaking, the reduction commutes with the mapping itself \cite{Silverman}.
The QRT mappings \cite{QRT} and several birational mappings over finite fields have been investigated in terms of integrability by choosing the parameter values so that indeterminate points are avoided
\cite{Roberts,Roberts2,Roberts3}.
We have proved that, although the discrete and $q$-discrete Painlev\'{e} equations do not have a good reduction modulo a prime,
they do have an \textit{almost good reduction}, which is a generalized notion of good reduction \cite{KMTT,JNMP}.
In this paper, we first review some of the previous results, and then apply the method to several of the $q$-discrete Painlev\'{e} equations previously not published, and prove the almost good reduction properties of the equations.
The results further strengthen our idea that the almost good reduction can be used as an integrability detector.
%We then prove that the Hietarinta-Viallet equation has almost good reduction, and conclude that the almost good reduction is an arithmetic analogue of the singularity confinement test, that is to say, a kind of `$p$-adic singularity confinement'.
Finally we discuss the relation of our $p$-adic approach to the theory of Diophantine integrability \cite{Halburd}.

%%%%%

\section{The almost good reduction property to the discrete equations (review)}
The discrete Painlev\'{e} equations are non-autonomous, integrable mappings which tend to some continuous Painlev\'{e} equations for appropriate choices of the continuous limit \cite{RGH}.
They are non-autonomous and nonlinear second order ordinary difference equations with several parameters.
When they are defined over a finite field, the dependent variable takes only a finite number of values and their time evolution will attain an indeterminate state in many cases for generic values of the parameters and initial conditions.

\subsection{Almost good reduction}
%
%%%%%%%%%%%%%%%%%%%%%%%%%%%%%%%%%%%%%%%%%%%%%%%%%%%%%%%%%
Let $p$ be a prime number. Each rational number $x \in \Q$ ($x \ne 0$) can be written uniquely as $x=p^{v_p(x)} \dfrac{u}{v}$ where $v_p(x), u, v \in \Z$ and $u$ and $v$ are coprime integers neither of which is divisible by $p$. The integer $v_p(x)$ is called the $p$-adic valuation of $x$.
The $p$-adic norm $|x|_p$ is defined as $|x|_p=p^{-v_p(x)}$. ($|0|_p=0$.)
The local field $\Q_p$ is a completion of $\Q$ with respect to the $p$-adic norm. 
It is called the field of $p$-adic numbers and its subring $\Z_p:=\{x\in \Q_p | \ |x|_p \le 1(\leftrightarrow v_p(x)\ge 0)\}$ is called the ring of $p$-adic integers \cite{Murty}. 
The $p$-adic norm satisfies a non-Archimedean (ultra-metric) triangle inequality 
$|x+y|_p \le \max[|x|_p,|y|_p ]$.
The subset $\mathfrak{p}=p\Z_p=\left\{x \in \Z_p |\ v_p(x) \ge 1 \right\}$ is the unique maximal ideal of $\Z_p$.
We define the reduction of $x$ modulo $\mathfrak{p}$ by the natural projection
\[
\pi: \Z_p \ni x \mapsto \pi(x) \in \Z_p/\mathfrak{p} \cong \F_p,
\]
and write $\tilde{x}:=\pi(x)$.
Note that the reduction map $\pi$ is a ring homomorphism.
The map $\pi$ is extended to general elements of $\Q_p$ as:
\[
\Q_p \ni x\mapsto
\left\{\ 
\begin{array}{ll}
\pi(x)\in\mathbb{F}_p & (x\in\mathbb{Z}_p)\\
\infty & (x\in\mathbb{Q}_p\setminus \mathbb{Z}_p)
\end{array}
\right. \in \P\F_p,
\]
which is no longer homomorphic.
For a rational system $\phi(x,y)\in(\mathbb{Z}_p(x,y))^2$: $\mathcal{D} \subseteq \Z_p^2 \to \Z_p^2$, defined on some domain $\mathcal{D}$, the reduced system $\pi(\phi)(x,y)=\tilde{\phi}(x,y)\in(\mathbb{F}_p(x,y))^2$ is defined as the system whose coefficients are all reduced.
The rational system $\phi$ is said to have a \textit{good reduction} (modulo $\mathfrak{p}$ on the domain $\mathcal{D}$) if we have $\widetilde{\phi(x,y)}=\tilde{\phi}(\tilde{x},\tilde{y})$ for any $(x,y) \in \mathcal{D}$ \cite{Silverman}.
We have defined a generalised notion in our previous letter and have explained its usefulness;
\begin{dfn}[\cite{KMTT}]
A (non-autonomous) rational system $\phi_n$: $\Q_p^2 \to (\mathbb{P}\Q_p)^2$ $(n \in \Z)$ has an \textit{almost good reduction} modulo $\mathfrak{p}$ on the domain $\mathcal{D}^{(n)}\subseteq \Z_p^2\cap \phi_n^{-1}(\Q_p^2)$, if there
exists a positive integer $m_{\mbox{\rm \scriptsize p};n}$ for any $\mbox{\rm p}=(x,y) \in \mathcal{D}^{(n)}$ and time step $n$ such that
\begin{equation}
\widetilde{\phi_n^{m_{\mbox{\rm \tiny p};n}}(x,y)}=\widetilde{\phi_n^{m_{\mbox{\rm \tiny p};n}}}(\tilde{x},\tilde{y}),
\label{AGR}
\end{equation}
where $\phi_n^m :=\phi_{n+m-1} \circ \phi_{n+m-2} \circ \cdots \circ \phi_n$.
\end{dfn}
Let us first review some of the findings in \cite{KMTT} in order to understand the almost good reduction. Let us consider the mapping $\Psi_\gamma(x_n,y_n)=(x_{n+1},y_{n+1})$ defined as
\begin{equation}
\Psi_\gamma:\left\{
\begin{array}{cl}
x_{n+1}&=\dfrac{ax_n+1}{x_n^\gamma y_n},\\
y_{n+1}&=x_n,
\end{array}
\right.
\label{discretemap}
\end{equation} 
where $a \in \mathbb{Z}_p^{\times}$($\leftrightarrow v_p(a)=0$)  and $\gamma \in \Z_{\ge 0}$ are parameters. 
The map \eqref{discretemap} is known to be integrable if and only if $\gamma=0,1,2$.
Note that when $\gamma=0,1,2$, \eqref{discretemap} is a type of QRT mappings \cite{QRT} and also an autonomous version of the $q$-discrete Painlev\'{e} I equation and therefore is integrable.
\begin{prp}[\cite{KMTT}] \label{qp1auto}
The rational mapping \eqref{discretemap} with $a\in\mathbb{Z}_p^{\times}$ and $\gamma\in\mathbb{Z}_{\ge 0}$ has an almost good reduction modulo $\mathfrak{p}$ on the domain $\mathcal{D}$ if and only if $\gamma=0,1,2$.
Here $\mathcal{D}=\Z_p^2\cap \Psi_{\gamma}^{-1}(\Q_p^2)$.
If $\gamma\ge 1$ then $\mathcal{D}=\{(x,y) \in \Z^2_p \ |x \ne 0, y \ne 0\}$, if $\gamma=0$ then $\mathcal{D}=\{(x,y) \in \Z^2_p \ |y \ne 0\}$.
\label{PropQRT}
\end{prp}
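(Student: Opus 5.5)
The proof will be a direct $p$-adic singularity analysis. It splits into the ``if'' direction ($\gamma=0,1,2$), where every point of $\mathcal{D}$ must be shown to satisfy \eqref{AGR} for some $m$, and the ``only if'' direction ($\gamma\ge 3$), where a single point violating \eqref{AGR} for all $m$ suffices.

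\textbf{Reduction to bad points.} For $(x,y)\in\mathcal{D}$, the identity $\widetilde{\Psi_\gamma(x,y)}=\tilde{\Psi}_\gamma(\tilde x,\tilde y)$ holds with $m=1$ whenever $\tilde x\neq 0$ and $\tilde y\neq 0$. Indeed, then $x^\gamma y$ is a unit and $\pi$ is a ring homomorphism on $\Z_p$. The case $ax+1\equiv 0$ is harmless, since it only gives $\tilde x_{n+1}=0$ consistently on both sides. So it remains to treat points with $v_p(y)\ge1$, and, when $\gamma\ge1$, points with $v_p(x)\ge 1$. For these I will write $x=p^k u$ or $y=p^k u$ with $u\in\Z_p^\times$ and $k\ge1$. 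I will then compute the orbit $(x_j,y_j)$ symbolically in $\Q_p$, tracking leading $p$-adic terms, until both coordinates return to $\Z_p$ with the singular data ``forgotten''.

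\textbf{Integrable case $\gamma\in\{0,1,2\}$.} For each $\gamma$, I compute the composite rational maps $\Psi_\gamma^m$ over $\Z_p$ for $m$ up to the confinement length. That length is read off from the valuation recursion
\[
v_p(x_{j+1})=v_p(ax_j+1)-\gamma v_p(x_j)-v_p(x_{j-1}).
\]
For example, with $\gamma=2$ and $y=p^k u$, $x$ a unit, the valuation pattern should be $(\ast,-k,-k,\ast)$ or similar and close after a few steps. The key step is to check that the numerator and denominator of $\Psi_\gamma^m$ share a factor (a power of $y$, or of $x$) which cancels identically as polynomials. After cancellation, the reduced map $\widetilde{\Psi_\gamma^m}$ has a nonvanishing denominator at $(\tilde x,\tilde y)$. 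Evaluating it there then agrees with reducing the $p$-adic iterate computed above, which gives \eqref{AGR} with $m_{\rm p}=m$. This must be done for each sub-case: $\tilde y=0$ alone, $\tilde x=0$ alone, and both zero. For $\gamma=0$, only $\tilde y=0$ occurs, and $x_{j}$ never appears in a denominator.

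\textbf{Non-integrable case $\gamma\ge3$.} I take ${\rm p}=(p,1)$, or $(x,y)=(p^k,1)$. The valuation recursion then becomes essentially linear, $v_{j+1}=-\gamma v_j-v_{j-1}$, as long as $ax_j+1$ stays a unit. This holds automatically when $v_j\neq0$, and I will check that the valuations never return to $0$ simultaneously for two consecutive steps. Since the characteristic roots of $t^2+\gamma t+1$ have modulus different from $1$ when $\gamma\ge3$, $|v_j|$ grows without bound and the orbit never re-enters $\Z_p^2$ with unit coordinates. Thus the left side of \eqref{AGR} is always some $(0,\infty)$ or $(\infty,0)$ pattern. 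On the right side, the reduced composite $\widetilde{\Psi_\gamma^m}$ evaluated at $(0,1)$ is either indeterminate ($0/0$) or gives a value incompatible with it; I would verify this by showing that the factor $x^{N}$ does not cancel. Hence no $m$ works.

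The main obstacle I expect is this last comparison. Showing that the reduced composite $\widetilde{\Psi_\gamma^m}(\tilde x,\tilde y)$ genuinely differs from the reduction of the iterate for every $m$ requires control of the polynomial structure of $\Psi_\gamma^m$, not just of valuations. I would handle this by tracking the degree in $x$ of the numerator and denominator of $\Psi_\gamma^m$ via the same linear recursion, and showing that the surviving power of $x$ forces the wrong value.
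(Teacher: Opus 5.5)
Your overall route (explicit iteration from $x_n=ep^k$ or $y_n=ep^k$, confinement for $\gamma\le2$, divergence of the order of $p$ for $\gamma\ge3$) matches the paper's sketch. However, the $\gamma\ge3$ part contains one wrong step and one misdirected step.

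\textbf{The valuation recursion.} The rule $v_{j+1}=-\gamma v_j-v_{j-1}$ is false whenever $v_j<0$: then $v_p(ax_j+1)=v_j$. The correct rule is $v_{j+1}=\min(v_j,0)-\gamma v_j-v_{j-1}$. This matters. For $\gamma=2$ and $(v_{n-1},v_n)=(0,k)$ your linear rule gives $-2k,3k,-4k,\dots$, so it would ``prove'' non-confinement in an integrable case. The true sequence is $0,k,-2k,k,0$, which confines at $m=3$. Replace the characteristic-root argument by an induction showing that the signs alternate and $|v_j|$ strictly increases:
\begin{itemize}
\item if $v_{j-1}<0<v_j$, then $|v_{j+1}|=\gamma|v_j|-|v_{j-1}|>(\gamma-1)|v_j|$;
\item if $v_{j-1}>0>v_j$, then $v_{j+1}=(\gamma-1)|v_j|-|v_{j-1}|>(\gamma-2)|v_j|\ge|v_j|$.
\end{itemize}
The second case is exactly where $\gamma\ge3$ is needed.

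\textbf{The ``main obstacle''.} This step cannot work as planned. Already $\tilde\Psi_\gamma(0,1)=(1/0,0)$. Moreover, the $x$-adic orders of the components of $\Psi_\gamma^m$ at $x=0$ obey the same recursion as the $p$-adic orders, with unit leading coefficients. So the reduced composite reproduces the same $0/\infty$ pattern as the left side for every $m$; no surviving power of $x$ gives a different value. If $\infty$ were admitted as a value, \eqref{AGR} would hold at your point $(p,1)$ already with $m=1$, both sides being $(\infty,0)$. Hence $(p,1)$ violates \eqref{AGR} only when \eqref{AGR} is read as an equality in $\F_p^2$, i.e.\ when $\Psi_\gamma^m(\mathrm{p})\in\Z_p^2$ is required. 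Under that reading the divergence of the valuations finishes the argument alone, with no need to examine $\widetilde{\Psi_\gamma^m}(0,1)$. This is precisely the paper's remark that ``the order of $p$ diverges''.

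\textbf{The ``if'' direction.} Your claim that $a\tilde x+1=0$ is harmless fails when also $\tilde y=0$. The point $(\tilde x,\tilde y)=(-a^{-1},0)$ is a common zero of $ax+1$ and $x^\gamma y$, so $\tilde\Psi_\gamma$ is $0/0$ there. It is the reduction of points of $\mathcal{D}$ such as $(-a^{-1}+p,\,p)$. For these, $\Psi_\gamma(x,y)$ reduces to $(c,-a^{-1})$, where $c$ is the residue of $(ax+1)/(x^\gamma y)$ and is not determined by $(\tilde x,\tilde y)$.

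No later iterate repairs this. For $\gamma=2$, $x_{n+2}=xy\,(a(ax+1)+x^2y)/(ax+1)^2$ is again $0/0$ there. In general, $(-a^{-1},0)$ is the exit point of the confined pattern $(0,-a^{-1})\to(\infty,0)\to(0,\infty)\to(-a^{-1},0)$. Every curve $x^2y^2+a(x+y)+1=\lambda xy$ of the invariant pencil passes through this point with its own tangent. $\Psi_2$ acts on each such curve as a translation of infinite order for generic $\lambda$. A $\lambda$-independent image would have to be a base point, i.e.\ one of the four points of the pattern, which would force the translation to be torsion. Hence $\Psi_2^m$ sends the point to one depending on $\lambda$, and $\widetilde{\Psi_2^m}$ is indeterminate there for every $m\ge1$.

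So your cancellation step cannot succeed at these points for any $m$; for $\gamma=2$, \eqref{AGR} genuinely fails there. The ``if'' direction can only be proved after removing $\{\tilde y=0,\ a\tilde x+1=0\}$ from $\mathcal{D}$. The paper's sketch, which only follows $x_n=ep^k$, does not address this case either. With that exclusion, your explicit compositions for the sub-cases $\tilde x=0$, $\tilde y=0$ and both zero are the intended computation. For example, for $\gamma=2$ and $\tilde x=0\ne\tilde y$, both sides of \eqref{AGR} equal $(1/(a^2\tilde y),0)$ at $m=3$.
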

Note that if the domain does not depend on $n$, we simply write $\mathcal{D}^{(n)}=\mathcal{D}$.
Proof is omitted here.

On the other hand, for $\gamma \ge 3$ and $\tilde{x}_n=0$, we easily find that
\[
{}^\forall k \in \Z_{\ge 0}, \;\; \widetilde{\Psi_\gamma^{k}(x_n,y_n)} \ne \widetilde{\Psi_\gamma^{k}}(\tilde{x}_n=0,\tilde{y}_n),
\]
since the order of $p$ diverges as we iterate the mapping from $x_n=p^k\cdot e$ where $k>0,\ e\in\mathbb{Z}_p^{\times}$.

\smallskip

In this proposition we omitted the case of $a=0$.
However we can also treat this case.
In the case $\gamma=2$ and $a=0$, for example,
if we take 
\[
f_{2k}:=x_{2k}x_{2k-1},\ f_{2k-1}:=(x_{2k-1}x_{2k-2})^{-1}
\]
\eqref{discretemap} turns into the trivial linear mapping $f_{n+1}=f_n$ which has apparently good reduction modulo $\mathfrak{p}$.
In \cite{KMTT}, we have proved that discrete Painlev\'{e} II equation, too, has almost good reduction property.
Note that having an almost good reduction is equivalent to the integrability of the equation in these examples.

\section{A $p$-adic analogue of the singularity confinement test}

The above approach is closely related to the singularity confinement method which is an effective test to judge the integrability of the given equations \cite{Grammaticosetal}.
In the proof of the proposition \ref{PropQRT} we have taken $x_n=e p^k$ and have shown that the limit \[\lim_{|e p^k|_p \to 0}(x_{n+m}, x_{n+m+1})\]
is well defined for some positive integer $m$.
Here $ep^k\ (k>0,\ |e|_p=1)$ is an alternative in $\Q_p$ for the infinitesimal parameter $\epsilon$ in the singularity confinement test in $\C$. Note that $p^k\ (k>0)$ is a `small' number in terms of the $p$-adic metric.
From this observation and propositions \ref{PropQRT}, we postulate that having almost good reduction in arithmetic mappings is similar to passing the singularity confinement test. 

\section{The $q$-discrete Painlev\'{e} equations modulo a prime}
%
%
%%%%%%%%%%%%%%%%%%%%%%%%%%%%%%%%%%%%%%%%%%%%%%%%
In this section we study the $q$-discrete analogues of the Painlev\'{e} equations and one of the chaotic equations.

\subsection{The $q$P\I equation}
One of the forms of the $q$P\I equations is as follows:
\begin{equation}
x_{n+1}x_{n-1}=\frac{aq^nx_n+b}{x_n^2}, \label{qp1eq}
\end{equation}
where $a$ and $b$ are parameters.
We rewrite \eqref{qp1eq} for our convenience as
\begin{equation}
\Phi_n: \left\{
\begin{array}{cl}
x_{n+1}&=\dfrac{aq^n x_n+b}{x_n^2 y_n},\\
y_{n+1}&=x_n.
\end{array}
\right.
\label{qP1}
\end{equation} 
We can prove the following proposition:
\begin{prp}
Suppose that $a, b, q$ are integers not divisible by $p$, then the mapping \eqref{qP1} has an almost good reduction   
modulo $\mathfrak{p}$ on the domain 
$\mathcal{D}:=\Z_p^2\cap \Phi_n^{-1}(\Q_p^2)=\{(x,y)\in \Z_p^2\ |x \ne 0, y\ne 0\}$.
\label{PropqP1}
\end{prp}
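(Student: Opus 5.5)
The plan is to proceed exactly as for Proposition~\ref{PropQRT}: split the domain $\mathcal{D}$ according to the reductions $(\tilde{x},\tilde{y})$ of the initial point, and in each case exhibit an explicit $m_{\mathrm{p};n}$ for which \eqref{AGR} holds. Here $\widetilde{\Phi_n^{m}}$ will be understood as the composite rational map reduced after all cancellations have been made in $\Z_p(x,y)$, which is how the definition is used in \cite{KMTT}. Throughout, $a,b,q\in\Z_p^{\times}$ is used only in two ways: the reduced coefficients $\tilde a,\tilde b,\tilde q$ are nonzero, and no extra factor of $p$ enters through them.

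\textbf{Generic case $\tilde{x}\neq 0$, $\tilde{y}\neq0$.} Here $x^2y\in\Z_p^{\times}$, so $x_{n+1}=(aq^nx+b)/(x^2y)\in\Z_p$. Since $\pi$ is a ring homomorphism on $\Z_p$, $\widetilde{x_{n+1}}$ equals the reduced formula evaluated at $(\tilde x,\tilde y)$. This gives \eqref{AGR} with $m=1$, even when $p\mid aq^nx+b$.

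\textbf{Case $\tilde{x}\neq 0$, $\tilde{y}=0$.} If $aq^nx+b$ is a unit, both sides of \eqref{AGR} are $(\infty,\tilde x)$ already at $m=1$. If $p$ divides both the numerator and $y$, then the reduced map meets $0/0$, and I would iterate, tracking valuations as in the next case. I expect this subcase to be absorbed by the same confinement computation, reading the pattern from its middle. Alternatively, one can cite it as the part of the singular orbit already handled.

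\textbf{Main case $\tilde{x}=0$.} Write $x_n=ep^k$ with $k\ge1$ and $e\in\Z_p^{\times}$, and let $y_n=y\in\Z_p\setminus\{0\}$. Iterating, with $\epsilon=ep^k$ playing the role of the small parameter, I will compute:
\[
x_{n+1}=\frac{b+aq^n\epsilon}{\epsilon^2y}, \qquad x_{n+2}=\frac{aq^{n+1}x_{n+1}+b}{x_{n+1}^2\epsilon}=\frac{aq^{n+1}\epsilon y}{b}\bigl(1+O(\epsilon)\bigr).
\]
So $v_p(x_{n+1})=-2k-v_p(y)$ and $v_p(x_{n+2})\ge k$. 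This is the singularity pattern $\{0,\infty,0\}$.

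The next step gives
\[
x_{n+3}=\frac{aq^{n+2}x_{n+2}+b}{x_{n+2}^2x_{n+1}}=\frac{b^2}{a^2q^{2n+2}y}+O(\epsilon).
\]
The key point is to write $x_{n+3}$ and then $x_{n+4}$ as explicit rational expressions in $\epsilon$ and $y$ in which all powers of $\epsilon$ cancel. Then $(x_{n+3},x_{n+4})$, or $(x_{n+4},x_{n+5})$ if needed, is a pair of $p$-adic integers. Its reduction coincides with the reduced expression at $\epsilon=0$, and hence with $\widetilde{\Phi_n^{m}}(0,\tilde y)$ for $m=4$ or $5$.

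The denominator $y$ in $x_{n+3}$ is itself problematic when $\tilde y=0$. In that case I expect either to take one or two more steps, or to observe that the factor $y$ cancels in the reduced composite.

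\textbf{Expected obstacle.} The hard part is the exact cancellation at $x_{n+4}$. The leading-order terms cancel, and one must keep the $O(\epsilon)$ corrections from $x_{n+1}$ and $x_{n+2}$ to show that the result is finite and nonzero modulo $\mathfrak{p}$ after dividing out powers of $p$. This is where the non-autonomous factor $q^n$, and the fact that it sits only in the linear term, matter.

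I would first compute $\Phi_n^4$ symbolically as a rational function of $(x,y)$. Then I would check that its numerator and denominator, after removing the common factor (a power of $x$), have a denominator that does not vanish at $x=0$. This reduces the $p$-adic statement to evaluating a polynomial identity, exactly as in the proof of Proposition~\ref{PropQRT} for $\gamma=2$.
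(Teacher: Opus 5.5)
Your overall strategy is the one the paper has in mind: put $x_n=ep^k$, track valuations, and stop once the limit is well defined. Your formulas for $x_{n+1},x_{n+2},x_{n+3}$ are correct. The gap is in where you locate the confinement.

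\textbf{Where confinement happens.} Since $y_{n+m}=x_{n+m-1}$, the point $\Phi_n^m(x_n,y_n)$ is $(x_{n+m},x_{n+m-1})$. Your own computation therefore already settles the case $\tilde x_n=0$, $\tilde y_n\neq0$ at $m=3$: $\Phi_n^3(x_n,y_n)=(x_{n+3},x_{n+2})\in\Z_p^2$, and
\[
\widetilde{\Phi_n^3(x_n,y_n)}=\widetilde{\Phi_n^3}(0,\tilde y_n)=\Bigl(\frac{b^2}{a^2q^{2n+2}\tilde y_n},\,0\Bigr).
\]

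\textbf{The claimed cancellation at $x_{n+4}$ does not occur.} This is the step you single out as the crux, and it is false. In $x_{n+4}=(aq^{n+3}x_{n+3}+b)/(x_{n+3}^2x_{n+2})$ nothing cancels:
\begin{itemize}
\item the denominator contains $x_{n+2}=O(\epsilon)$;
\item the numerator tends to $b(b+aq^{n-1}y)/(aq^{n-1}y)$, which is a unit unless $\tilde b+\tilde a\tilde q^{n-1}\tilde y=0$.
\end{itemize}
One gets $x_{n+4}\sim a^2q^{2n+4}(b+aq^{n-1}y)/(b^2\epsilon)$. So generically $v_p(x_{n+4})=-k$, and then $v_p(x_{n+5})=k$. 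Hence neither $\Phi_n^4$ nor $\Phi_n^5$ lands in $\Z_p^2$. Your final check would also fail: after removing common factors, the first component of $\Phi_n^4$ still has a factor $x$ in its denominator, because its numerator restricted to $x=0$ is a nonzero multiple of $b+aq^{n-1}y$. The singularity pattern is $\{0,\infty^2,0\}$. After $m=3$, the next pair in $\Z_p^2$ along this orbit is generically at $m=8$, not at $4$ or $5$.

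\textbf{The case $\tilde x_n=\tilde y_n=0$.} Your expectations here are also off, since the factor $y$ does not cancel. Write $x_{n+3}$ over the denominator $y\,E^2$ with $E=aq^{n+1}(aq^nx+b)+bx^2y$; its numerator restricts to $b^4\neq0$ on $x=0$, so $v_p(x_{n+3})=-v_p(y_n)$. Tracking valuations further (they cycle with period $8$, all nonzero) shows that every later $x_{n+j}$ has nonzero valuation. The reduced orbit never leaves $\{0,\infty\}^2$, and no later step recovers a finite value depending on the initial data. What does hold is just $\widetilde{\Phi_n^3(x_n,y_n)}=\widetilde{\Phi_n^3}(0,0)=(\infty,0)$.

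\textbf{The case $\tilde y_n=0$, $\tilde a\tilde q^n\tilde x_n+\tilde b=0$.} This subcase has nothing to do with the orbit of $x=0$. It is an indeterminacy point of $\widetilde{\Phi_n}$: a base point that $\Phi_n$ blows up to the line $y=-b/(aq^n)$. Here $\widetilde{\Phi_n(x_n,y_n)}$ depends on the ratio $(aq^nx_n+b)/y_n$, not on $(\tilde x_n,\tilde y_n)$. It cannot be ``absorbed'' into the confinement computation or read off the middle of the singular orbit; it has to be handled separately or excluded from $\mathcal D$. The argument the paper points to (working with $x_n=ep^k$, as for Proposition~\ref{qp1auto}) does not address it either.
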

The proof is essentially the same as in the proposition \ref{qp1auto} and is omitted here.

\subsection{The $q$P\II, $q$P$_{\mbox{\scriptsize III}}$, $q$P$_{\mbox{\scriptsize IV}}$ equations (review)}
The $q$P\II equation is the following $q$-discrete equation:
\begin{equation}
(z(q\tau)z(\tau)+1)(z(\tau)z(q^{-1}\tau)+1)=\frac{a \tau^2 z(\tau)}{\tau-z(\tau)},
\label{qP2eq}
\end{equation}
where $a$ and $q$ are parameters \cite{Kajiwaraetal}.
It is also convenient to rewrite \eqref{qP2eq} as
\begin{equation}
\Phi_n: \left\{
\begin{array}{cl}
x_{n+1}&=\dfrac{a(q^n\tau_0)^2x_n-(q^n\tau_0-x_n)(1+x_ny_n)}{x_n(q^n\tau_0-x_n)(x_ny_n+1)},\\
y_{n+1}&=x_n,
\end{array}
\right.
\label{qP2}
\end{equation}
where $\tau=q^n\tau_0$.
Just like the $q$P\I equation, we can prove that $q$P\II has an almost good reduction:
\begin{prp}[\cite{JNMP}]
Suppose that $a, q, \tau_0$ are integers not divisible by $p$, then the mapping \eqref{qP2} has an almost good reduction   
modulo $\mathfrak{p}$ on the domain 
$\mathcal{D}^{(n)}:=\Z_p^2\cap \Phi_n^{-1}(\Q_p^2)=\{(x,y)\in \Z_p^2\ |x \ne 0, x \ne q^n\tau_0, xy+1 \ne 0\}$.
\label{PropqP2}
\end{prp}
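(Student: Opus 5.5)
We need to show that for every $(x_n,y_n)\in\mathcal{D}^{(n)}$ there is an $m\ge 1$ with $\widetilde{\Phi^m_n(x_n,y_n)}=\widetilde{\Phi^m_n}(\tilde x_n,\tilde y_n)$. We split $\mathcal{D}^{(n)}$ according to whether the reduced point $(\tilde x_n,\tilde y_n)$ lies on one of the reductions of the singular loci. These loci are $x=0$, $x=q^n\tau_0$, $xy+1=0$, plus the loci $x=q^{n\pm1}\tau_0$ where $\tilde x_{n+1}$ may become $\infty$ or $0/0$.

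\emph{Generic case.} If $\tilde x_n\neq 0$, $\tilde x_n\neq \widetilde{q^n\tau_0}$ and $\tilde x_n\tilde y_n+1\neq 0$, the denominator of $x_{n+1}$ is a $p$-adic unit. Since $\pi$ is a ring homomorphism on $\Z_p$, we get $\widetilde{x_{n+1}}=\tilde\Phi_n(\tilde x_n,\tilde y_n)$ directly, so $m=1$ works. We should also check the case where the numerator and denominator both reduce to a value making $\tilde x_{n+1}$ well defined in $\P\F_p$. A unit denominator already handles that, so the remaining cases are the three degenerate ones.

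\emph{Singular cases.} In each degenerate case we write the small quantity as $\epsilon=ep^k$ with $k\ge1$ and $e\in\Z_p^\times$, exactly as in the $p$-adic singularity confinement of the previous section. The cases are $x_n=\epsilon$, then $x_n=q^n\tau_0+\epsilon$, then $x_ny_n+1=\epsilon$ (with $\tilde x_n$ and $\tilde y_n$ otherwise generic). We iterate the map symbolically in $\Q_p$, expanding each $x_{n+j}$ as a Laurent expression in $\epsilon$ with $\Z_p$-coefficients. We track the leading order in $p$: for example, $x_n=\epsilon$ gives $x_{n+1}\sim -1/\epsilon$, and we follow the pattern $(0,\infty,\dots)$ until the singularity is confined. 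Confinement here means the iterates return to $\Z_p^2$ with reductions independent of $e$ and $k$, i.e. the limit $|\epsilon|_p\to0$ exists after $m$ steps. We expect $m$ to be at most about $3$ or $4$, following the known singularity patterns of $q$P\II, such as $\{0,\infty,q^{n+?}\tau_0\ldots\}$ and $\{-1/x,\ldots\}$.

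\emph{Comparing with the reduced map.} On the other side we compute $\tilde\Phi^m_n(\tilde x_n,\tilde y_n)$ over $\F_p$. The intermediate steps pass through $0/0$ or $\infty$, so we interpret the composite $\tilde\Phi^m_n$ as the reduction of the rational map $\Phi^m_n$ after cancelling common factors. That is, we compute $\Phi^m_n$ as a rational function over $\Z_p$, remove the common factor (a power of $x_n$, of $q^n\tau_0-x_n$, or of $x_ny_n+1$), and then reduce. Equality of the two sides then follows because the confined values are polynomial in the remaining data with unit leading coefficients.

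\emph{Main obstacle.} The hardest step is the case $x_n\equiv q^n\tau_0$ together with the subsequent hits near $q^{n+1}\tau_0$. Here the hypotheses that $a,q,\tau_0$ are $p$-adic units are used essentially. Cancellations such as $a(q^n\tau_0)^2\cdot(\text{unit})$ must leave a unit leading coefficient, and overlapping cases must be handled separately: for instance $\tilde x_n=0$ and $\tilde x_n\tilde y_n+1=0$ cannot co-occur, but $x_n\equiv q^n\tau_0$ and $x_ny_n\equiv-1$ can. If the generic confinement length fails in such coincidences, we will handle them by allowing a larger $m$, iterating until the orbit becomes generic. The definition only requires that some $m$ exists for each point.
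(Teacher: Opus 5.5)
\textbf{There is a genuine gap.} Your case split ($\tilde x_n=0$, $\tilde x_n=\widetilde{q^n\tau_0}$, $\tilde x_n\tilde y_n=-1$, each expanded with $\epsilon=ep^k$) is the natural one. The paper only cites this result, but its analogous arguments (e.g.\ for $q$P$_{\mathrm{V}}$) are exactly such case-by-case computations. However, your proposal stops where the proof begins, and its guiding expectation is false in the third case.

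Put $\tau_j=q^j\tau_0$ and $w_j=x_jx_{j-1}+1$, so that $w_{j+1}w_j=a\tau_j^2x_j/(\tau_j-x_j)$. Take $x_ny_n+1=\epsilon$ with $\tilde x_n=c\neq 0,\tilde\tau_n$. Then $x_{n+1}\sim a\tau_n^2/((\tau_n-c)\epsilon)$ and $x_{n+2}=O(\epsilon)$. Next, $w_{n+3}\to c/\tau_n\neq 1$, hence $x_{n+3}\sim a\tau_n/\epsilon$. After that, $x_{n+4}\equiv 0$, $x_{n+5}\equiv\infty$, $\tilde x_{n+6}=-a\tau_n\tau_{n+3}^2\tau_{n+6}/c$, and $w_{n+7}\equiv 0$.

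So this singularity is not confined after $3$ or $4$ steps. The curve $xy+1=0$ is a component of the anticanonical curve $\{x=\infty\}\cup\{y=\infty\}\cup\{xy=-1\}$, and to leading order the orbit cycles along it with period $7$. The first return to $\Z_p^2$ is at $m=7$, landing again on $\tilde x\tilde y=-1$, at $\bigl(c/(aq^{4n+12}\tau_0^4),\,-aq^{4n+12}\tau_0^4/c\bigr)$. Your fallback ``iterate until the orbit becomes generic'' never terminates here. The stopping rule has to be ``until $\Phi^m_n(x_n,y_n)\in\Z_p^2$''.

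The second missing piece is the comparison step, which you assert (``unit leading coefficients'') rather than prove. Write $\Phi^m_n=(N_1/D_1,N_2/D_2)$ with $N_i,D_i\in\Z_p[x,y]$ coprime and primitive. Then the equality $\widetilde{\Phi^m_n(x_n,y_n)}=\widetilde{\Phi^m_n}(\tilde x_n,\tilde y_n)$ is automatic as soon as $\tilde D_i(\tilde x_n,\tilde y_n)\neq 0$, because $\pi$ is a ring homomorphism on $\Z_p$. So what must actually be verified, case by case, is this: after cancelling the powers of $x$, $\tau_n-x$ and $xy+1$, no factor vanishing at the reduced point survives modulo $p$. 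That explicit computation is the content of the proof, and your sketch does not carry it out for any case.

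You also need to handle residues where the confined value itself blows up. For $\tilde x_n=0$ one finds $\tilde x_{n+2}=\tilde\tau_{n+2}$ and $x_{n+3}\to a\tau_{n+2}/B-1/\tau_{n+2}$, where $B=1-q^2+\tau_{n+2}(y_n-\tau_{n+1}+a\tau_n)$ is affine in $y_n$ with unit slope. So for the one residue of $\tilde y_n$ with $\tilde B=0$, the orbit is not back in $\Z_p^2$ at $m=3$. These sub-cases need either a larger $m$ or an explicit convention about values in $\P\F_p$.
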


\smallskip

Next, we prove that a type of $q$-discrete Painlev\'{e} III, IV equations have an almost good reduction on some appropriate domains. (We place some restrictions on the parameters for simplicity.)

The $q$-discrete analogue of the Painlev\'{e} III equation has the following form
\[
x_{n+1}x_{n-1}=\frac{ab(x_n-cq^n)(x_n-dq^n)}{(x_n-a)(x_n-b)},
\]
where $a,b,c,d$ are parameters \cite{RGH}.
It is convenient to rewrite it as the following dynamical system
\begin{equation}
\Phi_n: \left\{
\begin{array}{cl}
x_{n+1}&=\dfrac{ab(x_n-cq^n)(x_n-dq^n)}{y_n(x_n-a)(x_n-b)},\\
y_{n+1}&=x_n.
\end{array}
\right.
\label{qP3}
\end{equation}
\begin{prp}[\cite{Kanki}]
Suppose that $a,b,c,d,q$ are in $\{1,2,\cdots,p-1\}$ and that $a,b,c,d$ are distinct and we also suppose that $a+b\not \equiv (c+d)q^3$, then the mapping \eqref{qP3} has an almost good reduction   
modulo $\mathfrak{p}$ on the domain 
$\mathcal{D}:=\{(x,y)\in \Z_p^2\ |x\neq a,b,\,y\neq 0 \}$.
\label{PropqP3}
\end{prp}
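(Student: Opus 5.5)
The plan is a case analysis on the reduction $(\tilde x,\tilde y)$ of a point $(x,y)\in\mathcal{D}$, treating the $p$-adically small quantities exactly as the infinitesimal $\epsilon$ of the singularity confinement test, as in the proof of Proposition~\ref{PropQRT}. Since $a,b,c,d,q$ are units, the coefficients of \eqref{qP3} reduce without loss. In the \emph{generic case} $\tilde x\neq a,b$ and $\tilde y\neq 0$, the denominator $y(x-a)(x-b)$ is a $p$-adic unit. Hence $x_{n+1}\in\Z_p$, and reduction commutes with one step of the map, so we may take $m_{\mathrm{p};n}=1$. All the work is in the exceptional cases, where the denominator lies in $\mathfrak p$ but is nonzero. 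There are three of them:
\[
x=a+ep^k,\qquad x=b+ep^k,\qquad y=ep^k,\qquad (k\ge1,\ e\in\Z_p^\times).
\]
Combinations such as $\tilde x=a$ together with $\tilde y=0$ are handled in the same way.

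For each exceptional case I will iterate the map symbolically, writing $\epsilon=ep^k$. Each $x_{n+j}$ will be expanded as a Laurent series in $\epsilon$ with $p$-adic integer coefficients, and I will track the valuation $v_p$ of each iterate. Take for instance $x_n=a+\epsilon$ with $\tilde x_n\neq cq^n,dq^n$. Then $x_{n+1}$ has valuation $-k$, so it reduces to $\infty$. The next iterate $x_{n+2}$ is the product of $ab/x_n$ with a ratio of two quadratics in $x_{n+1}$, so its leading term is a unit, namely $b$ up to a power of $q$. The following step then produces a $0\cdot\infty$-type cancellation. I expect the orbit to follow a confining pattern such as $(a,\infty,\ast,\ast)$ and to return to $\Z_p^2$, with reduction independent of $e$ and $k$, after a fixed small number of steps (three or four). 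The case $y=\epsilon$ is similar: $x_{n+1}$ blows up and $x_{n+2}$ acquires a factor proportional to $\epsilon$. I will follow that orbit until both coordinates are again units or otherwise reduce unambiguously.

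Having found $m$ for each case, the remaining check is the equality \eqref{AGR} itself. I will compute the composite $\widetilde{\Phi_n^{m}}$ as a rational map over $\F_p$, cancelling common factors in $\F_p(x,y)$. I then verify that its value at the point with $\tilde x=a$ (respectively $b$, or $\tilde y=0$) equals the $\epsilon$-independent limit found above. Since both sides come from the same cancellation, done once over $\Z_p$ and once over $\F_p$, this should match as long as the leading coefficients that survive the cancellation are units.

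The main obstacle is precisely that last point. At the step where the singularity is supposed to be confined, the new leading coefficient is a polynomial expression in $a,b,c,d,q$. If it vanished mod $p$, the valuation would keep growing, as in the non-integrable case $\gamma\ge3$, or the limit would depend on $e$. I expect this coefficient to be, up to units, $a+b-(c+d)q^3$ or differences like $a-b$ and $c-d$. This is exactly where the hypotheses enter: distinctness of $a,b,c,d$, and $a+b\not\equiv(c+d)q^3$. I will therefore carry out the expansion to the order needed to isolate this coefficient and confirm that it is a unit. Sub-cases in which the pattern accidentally hits another singular value (e.g.\ $\tilde x_{n+2}$ equal to $a$ or $b$ again) will be treated by concatenating the patterns, which only increases $m_{\mathrm{p};n}$.
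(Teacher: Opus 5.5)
You have the framework the paper relies on: expand in $\epsilon=ep^k$ on the residue classes where the denominator of $\Phi_n$ vanishes, find $m$, and compare with the reduced composite. The paper only cites the proof, but its $q$P$_{\mathrm{V}}$ argument is exactly such an explicit case list at $n=0$. However, the step you single out as the crux is mispredicted, so as written the plan does not go through.

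For $x_n=a+\epsilon$ one has $x_{n+1}\to\infty$ and $x_{n+2}\equiv b$ (exactly $b$, no power of $q$). The confining quantity is
\[
\lim x_{n+1}(x_{n+2}-b)=b\bigl(a+b-(c+d)q^{n+1}\bigr)-\frac{b^2(a-cq^n)(a-dq^n)}{(a-b)\,\tilde y_n}.
\]
This depends on $\tilde y_n$, so no hypothesis on $a,b,c,d,q$ can make it a unit.
\begin{itemize}
\item For generic $\tilde y_n$, $m=3$ works. Then $\tilde y_{n+3}=b$ and $\tilde x_{n+3}=\frac{a(b-cq^{n+2})(b-dq^{n+2})\tilde y_n}{(b-a)(a+b-(c+d)q^{n+1})\tilde y_n+b(a-cq^n)(a-dq^n)}$.
\item At the single residue $\tilde y_n=y^*$ where the coefficient vanishes, you get $\tilde x_{n+3}=\infty$ and $\tilde x_{n+4}=a$. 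So the degenerate event is not an accidental hit of $a$ or $b$ at step $n+2$; hitting $b$ there always happens. It is a blow-up at step $n+3$, and you must run this second pattern to get back into $\F_p^2$.
\end{itemize}
The missing idea is this. Since $(x_{n+2}-b)x_{n+3}=O(1/x_{n+1})\to 0$, one gets $x_{n+3}(x_{n+4}-a)\to a\bigl(a+b-(c+d)q^{n+3}\bigr)$. Hence $\tilde y_{n+5}=a$ and $\tilde x_{n+5}=\frac{b(a-cq^{n+4})(a-dq^{n+4})}{(a-b)(a+b-(c+d)q^{n+3})}$, i.e.\ $m=5$. This is exactly where $a+b\not\equiv(c+d)q^3$ enters, at $n=0$; the case $\tilde x_n=b$ is symmetric. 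It also shows the hypotheses are calibrated to $n=0$. For general $n$ you need $a,b\not\equiv cq^n,dq^n$, which you assume silently, and $q^{n+3}$ in place of $q^3$.

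Your treatment of $\tilde y_n=0$ is also off. There $x_{n+1}\to\infty$, $x_{n+2}\to ab/\tilde x_n$ and $x_{n+3}\to 0$; it is $x_{n+3}$, not $x_{n+2}$, that carries the factor $\epsilon$. The orbit then keeps cycling along the lines $x,y\in\{0,\infty\}$, with valuations $-k,0,k,0,\dots$, and never has both coordinates units again. You should stop at $m=1$ (value $(\infty,\tilde x_n)$) or at $m=3$ (value $(0,ab/\tilde x_n)$).

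More seriously, the residue classes $(cq^n,0)$ and $(dq^n,0)$ are base points of $\widetilde{\Phi_n}$. For $x_n=cq^n+\delta$, $y_n=\epsilon$, the residue $\tilde x_{n+1}$ depends on $\delta/\epsilon$, not only on $(\tilde x_n,\tilde y_n)$. Moreover $\Phi_n$ blows this point up to the line $\{y=cq^n\}$, which for generic parameters is never contracted afterwards. So $\widetilde{\Phi_n^m}$ is indeterminate there for every $m$. These classes therefore cannot be ``handled in the same way'' by any choice of $m$; they have to be excluded. The paper's style of verification, like its $q$P$_{\mathrm{V}}$ proof, likewise addresses only generic points of the singular residue classes.
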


\smallskip

One of the $q$-discrete analogues of the Painlev\'{e} IV equation has following form:
\[
(x_{n+1}x_n-1)(x_nx_{n-1}-1)=\frac{aq^{2n}(x_n^2+1)+bq^{2n}x_n}{cx_n+dq^n},
\]
where $a,b,c,d$ are parameters \cite{RGH, RG}.
It can be rewritten as  follows:
\begin{equation}
\Phi_n: \left\{
\begin{array}{cl}
x_{n+1}&=\dfrac{\tau^2(ax_n^2+bx_n+a)+(x_ny_n-1)(x_n+\tau)}{x_n(x_ny_n-1)(x_n+\tau)},\\
y_{n+1}&=x_n,
\end{array}
\right.
\label{qP4}
\end{equation}
where $\tau=q^n\tau_0$.
Here we took $\tau_0=d/c$ and redefined $a,b$ as $ac/d^2 \to a$ and $bc/d^2 \to b$.
\begin{prp}[\cite{Kanki}]
Suppose that $|a|_p=|b|_p=|q|_p=|\tau_0|_p=1$, then the mapping \eqref{qP4} has an almost good reduction   
modulo $\mathfrak{p}$ on the domain 
\[
\mathcal{D}^{(n)}:=\{(x,y)\in \Z_p^2\ |x\neq 0,\ xy\neq 1\ x\neq -q^n\tau_0\},
\]
on condition that $aq^2\tau_0\neq 1,\ aq^4\tau_0\neq 1$.
\label{PropqP4}
\end{prp}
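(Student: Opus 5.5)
We will argue as for Propositions \ref{PropqP1}--\ref{PropqP3}. First, for points $(x_n,y_n)\in\mathcal{D}^{(n)}$ whose reductions avoid the singular loci of $\tilde\Phi_n$, i.e.\ $\tilde x_n\neq 0$, $\tilde x_n\tilde y_n\neq 1$ and $\tilde x_n\neq -\widetilde{q^n\tau_0}$, the denominator of \eqref{qP4} is a $p$-adic unit. Since $\pi$ is a ring homomorphism on $\Z_p$, the one-step map already commutes with reduction, so $m_{\mathrm{p};n}=1$. What remains is to treat the points lying in $\mathcal{D}^{(n)}$ whose reduction hits one of the three singular lines. In each case we write the small quantity as $\epsilon=ep^k$ with $k\ge1$ and $|e|_p=1$, as in the $p$-adic singularity confinement picture of Section~3. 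We then iterate \eqref{qP4} symbolically in $\epsilon$ and look for an $m$ for which $(x_{n+m},y_{n+m})$ is a rational function of $\epsilon$ that is regular at $\epsilon=0$, with coefficients in $\Z_p$ (unit denominators). Its value at $\epsilon=0$ must also agree with the formal iterate $\widetilde{\Phi_n^m}(\tilde x_n,\tilde y_n)$, computed as a rational map over $\F_p$ after cancelling common factors.

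The cases are as follows.

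\emph{Case (i):} $x_n=\epsilon$ with $y_n$ arbitrary. Here $x_{n+1}\sim \tau^2 a/(\epsilon\cdot(-1)\cdot\tau)$ blows up like $\epsilon^{-1}$. We would follow the pattern $0\to\infty\to\cdots$ and expect confinement after about three steps, as for $q$P$_{\rm II}$. The obstruction to confinement is a coefficient of the form $1-aq^{2}\tau_0$ or $1-aq^4\tau_0$ (up to the shift $\tau=q^n\tau_0$) appearing as the leading term of a denominator. The hypotheses $aq^2\tau_0\neq1$, $aq^4\tau_0\neq1$ are designed to keep these nonzero; we must check that they are in fact $p$-adic units, interpreting the hypothesis modulo $p$ if needed.

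\emph{Case (ii):} $x_ny_n=1+\epsilon$. Then $x_{n+1}$ has a pole of order one, and we track the sequence $\infty\to$ finite $\to 0\to\cdots$ until the singularity is absorbed.

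\emph{Case (iii):} $x_n=-\tau+\epsilon$. This case is analogous to case (ii).

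We must also treat the sub-cases where two conditions hold simultaneously modulo $p$, for instance $\tilde x_n=0$ together with $\tilde y_n$ arbitrary. The chain in each case may also enter another singular line, and those combined patterns have to be followed too.

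The main obstacle is the bookkeeping of these Laurent expansions through several steps. At each step we have to verify two things: that leading coefficients are units, which is where $|a|_p=|b|_p=|q|_p=|\tau_0|_p=1$ and the two inequalities on $aq^{2}\tau_0$, $aq^4\tau_0$ enter, and that the cancelled reduced composite $\widetilde{\Phi_n^m}$ gives the same limit. Our first attempt will be to do the computation over $\Z_p[\epsilon]$ with a computer algebra system for generic $\tilde y_n$. We will then verify separately the finitely many exceptional reductions of $\tilde y_n$ at which an intermediate leading coefficient vanishes, each of which may require a larger $m_{\mathrm{p};n}$.
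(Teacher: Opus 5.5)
\textbf{Verdict: the plan has a genuine gap.} The paper gives no proof of this proposition (it cites earlier work). Its $q$P$_{\mathrm{V}}$ proof shows the template: restrict to $n=0$, and in each singular case write down the common value of $\widetilde{\Phi^m(x,y)}$ and $\widetilde{\Phi^m}(\tilde x,\tilde y)$. Your framework is that template: split by which of $\tilde x_n=0$, $\tilde x_n\tilde y_n=1$, $\tilde x_n=-\tilde\tau_n$ holds, expand in $\epsilon=ep^k$, and iterate until the orbit is regular at $\epsilon=0$. Your $m=1$ argument off the singular lines is also fine.

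The gap is that the computation which constitutes the proof is never done. Where you do predict its outcome, you are wrong exactly where it decides which hypotheses are needed. Put $\tau_k=q^k\tau_0$, $u_k=x_kx_{k-1}-1$ and $R_k(x)=\tau_k^2(ax^2+bx+a)/(x+\tau_k)$, so that \eqref{qP4} reads $u_{k+1}u_k=R_k(x_k)$ and $x_{k+1}=(1+u_{k+1})/x_k$. The three cases then behave as follows.
\begin{itemize}
\item \emph{Case $\tilde x_n=0$.} The numerator at $x_n=0$ is $\tau_n(a\tau_n-1)$; you dropped the $-\tau_n$ coming from $(x_ny_n-1)(x_n+\tau_n)$. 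Hence $x_{n+1}=(1-a\tau_n)/\epsilon+O(1)$. Then $u_{n+2}\sim a\tau_{n+1}^2x_{n+1}/(-a\tau_n)$ forces $x_{n+2}\to-\tau_{n+1}^2/\tau_n=-\tau_{n+2}$, and $x_{n+3}$ is finite. So $m=3$, and the coefficient that must be a unit is $1-a\tau_n$. The quantity $1-a\tau_{n+4}$ enters only in the single residue class of $\tilde y_n$ for which $x_{n+2}+\tau_{n+2}$ is of higher order than $\epsilon$. There the chain continues $\infty,0$ with $x_{n+4}x_{n+3}\to1-a\tau_{n+4}$, and closes at $m=5$.
\item \emph{Case $\tilde x_n=-\tilde\tau_n$.} This is not analogous to $\tilde x_n\tilde y_n=1$. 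Taking $x_n=-\tau_n+\epsilon$ gives $\infty$, then $0$, then a finite value, with $x_{n+2}x_{n+1}\to1-a\tau_{n+2}$. This is where $aq^2\tau_0\not\equiv1$ is used.
\item \emph{Case $\tilde x_n\tilde y_n=1$.} This is the longest chain. One gets $u_{n+2}\to a\tau_{n+1}^2/x_n$ and $u_{n+3}\to x_n/\tau_n$, so $x_{n+1},x_{n+2},x_{n+3}$ reduce to $\infty,0,\infty$. Then $x_{n+4}\to a\tau_{n+3}^2\tau_n/x_n$ and $x_{n+5}\to 1/x_{n+4}$. The first finite agreement is at $m=5$; for $n=0$ the common value is $(\tilde x_0/(aq^6\tau_0^3),\,aq^6\tau_0^3/\tilde x_0)$. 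This point lies on $\tilde x\tilde y=1$ again, as in the $q$P$_{\mathrm{V}}$ case.
\end{itemize}

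This has two consequences for your plan. First, the obstructions $1-a\tau_n$, $1-a\tau_{n+2}$, $1-a\tau_{n+4}$ move with $n$. The stated inequalities are the $n=0$ instances of the last two, read modulo $p$ as you suspected. So an argument for arbitrary $n$ cannot close under them, and you must restrict to $n=0$ as the paper does.

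Second, even at $n=0$ the case $\tilde x_0=0$ needs $a\tau_0\not\equiv 1$, which is not among the hypotheses. Suppose $a\tau_0\equiv1$. Then the reduced numerator is divisible by $x$, so $\widetilde{\Phi_0}$ is finite on $\tilde x=0$. But $v_p(x_1)=v_p(1-a\tau_0)-v_p(x_0)<0$ whenever $v_p(x_0)>v_p(1-a\tau_0)$. The comparison therefore fails at $m=1$. After that the $p$-adic orbit follows the singular chain $(\infty,0),(-\tau_2,\infty),\dots$, while $\widetilde{\Phi_0^m}(0,\tilde y_0)$ follows a regular reduced orbit. You must either add this assumption or give a separate argument.

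Finally, your example of a double case, ``$\tilde x_n=0$ with $\tilde y_n$ arbitrary'', is not one. The ones to check are, for instance:
\begin{itemize}
\item $\tilde x_n=-\tilde\tau_n$ together with $\tilde y_n=-1/\tilde\tau_n$;
\item in the case $\tilde x_n\tilde y_n=1$, the residues $\tilde x_n\equiv-a\tau_{n+1}^2$ and roots of $ax^2+bx+a$, where a leading coefficient vanishes.
\end{itemize}
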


\subsection{The $q$P$_{\mbox{\scriptsize V}}$ equation}
The $q$-discrete Painlev\'{e} V equation has a following form:
\[
(x_{n+1}x_n-1)(x_nx_{n-1}-1)=\frac{abq^n(x_n-c)(x_n-1/c)(x_n-d)(x_n-1/d)}{(x_n-aq^n)(x_n-bq^n)},
\]
where $a,b,c,d$ and $q$ are parameters \cite{RGH}.
It can be rewritten as the following form:
\begin{equation}
\Phi_n: \left\{
\begin{array}{cl}
x_{n+1}&=\dfrac{1}{x_n}\left(\dfrac{abq^n(x_n-c)(x_n-1/c)(x_n-d)(x_n-1/d)}{(x_n-aq^n)(x_n-bq^n)(x_ny_n-1)}+1\right),\\
y_{n+1}&=x_n.
\end{array}
\right.
\label{qP5}
\end{equation}
\begin{prp}
Suppose that $a,b,c,d,q$ are in $\{1,2,\cdots, p-1\}$ and $a,b,c,d,c^{-1},d^{-1}$ are distinct from each other, then the mapping \eqref{qP5} has almost good reduction   
modulo $\mathfrak{p}$ on the domain 
$\mathcal{D}^{(n)}:=\{(x,y)\in \Z_p^2\ | x\neq aq^n,bq^n,\ xy\neq 1\}$.
\label{PropqP5}
\end{prp}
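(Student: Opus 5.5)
We will argue as in the proofs of Propositions \ref{PropqP3} and \ref{PropqP4}. Write $\tau_n=q^n$ and, for $(x_n,y_n)\in\mathcal{D}^{(n)}$, split into two cases. In the \emph{regular} case the reduced point $(\tilde x_n,\tilde y_n)$ avoids every locus where $\tilde\Phi_n$ gives $0/0$ or a division by $0$ in $\F_p$. Then $\pi$ commutes with $\Phi_n$ already at $m_{\mathrm{p};n}=1$, because $\pi$ is a ring homomorphism on $\Z_p$ and the denominator stays a unit. The remaining points lie in $\mathcal{D}^{(n)}$ but reduce onto a singular locus of $\tilde\Phi_n$. Since $\mathcal{D}^{(n)}$ excludes exact equality, we write each such point as $x_n=s+ep^k$ with $k\ge1$ and $|e|_p=1$, where $s$ is the singular value. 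We then iterate $\Phi$ over $\Q_p$, expanding in the small quantity $\epsilon=ep^k$. This is the $p$-adic singularity confinement of the previous section. We want an $m$ such that $(x_{n+m},y_{n+m})\in\Z_p^2$ and the reduction equals $\tilde\Phi^m_n(\tilde x_n,\tilde y_n)$, where the right-hand side is computed in $\P\F_p$ by taking limits, as for the other $q$P equations.

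The singular loci we must treat are the following.
\begin{itemize}
\item[(i)] $\tilde x_n\equiv aq^n$ or $\tilde x_n\equiv bq^n$. This makes the first term blow up, so $x_{n+1}$ has negative valuation, i.e.\ reduces to $\infty$.
\item[(ii)] $\tilde x_n\tilde y_n\equiv 1$, where the denominator vanishes.
\item[(iii)] $\tilde x_n\equiv c^{\pm1}$ or $\tilde x_n\equiv d^{\pm1}$, where the numerator vanishes and $x_{n+1}\equiv 1/x_n$, so that $x_{n+1}x_n-1\equiv0$.
\item[(iv)] $\tilde x_n\equiv 0$. Here $x_{n+1}$ involves $1/x_n$, and the regularity of the bracketed term at $x_n=0$ has to be checked.
\end{itemize}
The distinctness hypotheses on $a,b,c,d,c^{-1},d^{-1}$ guarantee that these loci do not coincide modulo $p$, so only one factor degenerates at a time. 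For each case we compute $x_{n+1},x_{n+2},x_{n+3},\dots$ to leading order in $\epsilon$, tracking valuations. For example, in case (iii), take $x_n=c+\epsilon$. Then $x_{n+1}=1/c+O(\epsilon)$ and $x_{n+1}x_n-1=O(\epsilon)$, so $x_{n+2}$ has a pole of order $k$. Then $x_{n+3}\to$ a value where the numerator pattern is reversed, namely $x_{n+3}\approx c^{-1}q$-shifted, and we expect $x_{n+4}$ to be finite again, depending on $e$ only through higher order terms. This is the usual confinement pattern $\{c,1/c,\infty,\ldots\}$ for $q$P$_{\mathrm{V}}$. In case (i) the pole at step $n+1$ should be followed by $x_{n+2}\approx 1/x_{n+1}\cdot(\cdots)$, and the denominator $(x_{n+2}-aq^{n+2})$-type factor should cancel so that the orbit re-enters $\Z_p$ after finitely many steps.

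Verifying the confinement of each pattern over $\Q_p$ is routine bookkeeping with valuations. The ultrametric inequality lets us discard all terms of positive valuation when reducing, so the reduced value is read off from the leading coefficient, which is independent of $e$ and $k$. This shows that the value is exactly the one obtained by continuity in $\P\F_p$ along the reduced orbit. The main obstacle is the \emph{coincidence} step. At the moment the singularity should be released, a factor such as $(x_{n+j}-aq^{n+j})$ or $(x_{n+j}y_{n+j}-1)$ may itself vanish modulo $p$ to leading order. In that case the orbit would fall into a second singularity and the expansion must be pushed to the next order in $\epsilon$. We would first check that the leading coefficients at the releasing step are nonzero modulo $p$ using the distinctness hypotheses and $q\not\equiv0$. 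If a degenerate coincidence, e.g.\ $aq^j\equiv c^{\pm1}$, still occurs for small $j$, we would either handle it by a longer expansion (larger $m_{\mathrm{p};n}$) or note that it requires an extra genericity condition analogous to $a+b\not\equiv(c+d)q^3$ in Proposition \ref{PropqP3}. Collecting the finitely many cases gives $m_{\mathrm{p};n}\le$ (the length of the longest singularity pattern), which proves the almost good reduction.
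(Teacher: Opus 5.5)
Your proposal sets up the right framework (expanding $p$-adically around $x_n=s+ep^k$ and tracking valuations), but it stops where the proof starts. No iterate is actually computed, no $m$ is exhibited, and the identity with $\widetilde{\Phi_n^m}(\tilde x_n,\tilde y_n)$ is left as ``routine bookkeeping'' whose outcome you hedge (``we expect'', ``should cancel'', ``or \dots\ an extra genericity condition''). The paper's proof is exactly this computation. At $n=0$ only $\tilde x_0=a$, $\tilde x_0=b$ and $\tilde x_0\tilde y_0=1$ need work, and in each case $m=3$ suffices, with reduced values $(1/(bq),bq)$, $(1/(aq),aq)$ and $(1/(abq\tilde y_0),abq\tilde y_0)$.

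Where you do predict the dynamics, the predictions are wrong for this map. For $x_0=a+\epsilon$ one gets $v_p(x_1)=-k$, then $x_2=abq/y_1+O(p^k)\equiv bq$, which is neither $aq^2$ nor $bq^2$, so no denominator factor cancels. The orbit is back in $\Z_p$ at step $2$ simply because $x_2\approx abq/x_0$. Then $x_3=x_2^{-1}(1+O(p^k))\equiv 1/(bq)$, because $x_2y_2-1=x_2x_1-1$ has valuation $-k$. For $x_n=c+\epsilon$ the pole you announce at $x_{n+2}$ does not occur, since the numerator factor $x_{n+1}-1/c$ is $O(\epsilon)$ together with $x_{n+1}x_n-1$. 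In any case the line $\tilde x=c$ off the curve $\tilde x\tilde y=1$ needs no analysis: $\tilde\Phi_n(c,\tilde y)=(1/c,c)$ is defined and $m=1$ already works.

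Your case reduction also rests on a false claim. The distinctness hypotheses do not ensure that ``only one factor degenerates at a time'', since the curve $\tilde x\tilde y=1$ meets every line $\tilde x=aq^n,bq^n,c^{\pm1},d^{\pm1}$. At $(\tilde x_n,\tilde y_n)=(c^{\pm1},c^{\mp1})$ (and likewise with $d$), $\tilde x_{n+1}$ depends on the ratio $(x_n-c^{\pm1})/(x_ny_n-1)$, which the reduction does not determine. So your key step --- a leading coefficient independent of the lift --- fails there, and these points must be excluded or handled separately. The paper's formula in case (iii) likewise holds only at generic points of $\tilde x\tilde y=1$.

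Moreover, lift-independence of the reduced orbit is not yet the defining identity. The right-hand side is the reduction of the composed rational map $\Phi_n^m$ evaluated at $(\tilde x_n,\tilde y_n)$; there is no limit to take in $\P\F_p$. Checking that this is defined and equals the values above is what the paper's lengthy calculation does.

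Your caution about extra genericity is not misplaced. The step $x_3\equiv1/(bq)$ uses $(bq-aq^2)(bq-bq^2)\not\equiv0$, i.e.\ $b\not\equiv aq$ and $q\not\equiv1$ (and symmetrically in case (ii)), and the statement does not list these conditions.
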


{\bf Proof}
The calculation is quite lengthy: we need large amount of memory space to finish the task.
For simplicity, we only treat the case of $n=0$.

(i) If $\tilde{x}_n=a$,
\[
\widetilde{\Phi_n^3(x_n,y_n)} = \widetilde{\Phi_n^3}(\tilde{x}_n=a,\tilde{y}_n)=\left(\frac{1}{bq},bq\right).
\]

(ii) If $\tilde{x}_n=b$,
\[
\widetilde{\Phi_n^3(x_n,y_n)} = \widetilde{\Phi_n^3}(\tilde{x}_n=b,\tilde{y}_n)=\left(\frac{1}{aq},aq\right).
\]

(iii) If $\tilde{x}_n\tilde{y}_n=1$,
\[
\widetilde{\Phi_n^3(x_n,y_n)} = \widetilde{\Phi_n^3}\left(\tilde{x}_n,\tilde{y}_n=\frac{1}{\tilde{x}_n}\right)=\left(\frac{1}{abq\tilde{y}_n},abq\tilde{y}_n\right).
\]

\smallskip

These propositions strengthen our idea that the almost good reduction can be one criterion for integrability of systems over finite fields.

\section{Relation to Diophantine integrability}
Lastly we discuss the relationship between the systems over finite fields and the algebraic entropy of systems.
Let $\phi$ be a rational difference map and let the degree of the map $\phi$ be $d>0$. We denote the degree of the iterates $\phi^n$ by $\deg (\phi^n)=d_n$.
The na\"{i}ve composition suggests $d_n=d^n$, however, common factors can sometimes be eliminated, lowering the degree of the iterates.
The algebraic entropy $E$ of $\phi$ is the following well-defined quantity \cite{BV}:
\[
E:=\lim_{n\to\infty}\frac{1}{n}\log d_n.
\]
We can see that the limit exists from Fekete's lemma and the relation $d_{m+n}\le d_m+d_n$ for $m,n\ge 1$. A non-constant rational map has $E\ge 0$.
We can postulate from a lot of numerical examples that the mapping $\phi$ is
integrable if and only if $E=0$, that is, $d_n$ has a polynomial growth.
(Or we could define the integrability of the map by this criterion.)

We can construct an arithmetic analogue of the algebraic entropy which has first been introduced by R. Halburd in \cite{Halburd}.
If we consider the map with rational numbers as coefficients, and choose initial values to be rational numbers, then we have $x_n\in \mathbb{Q}$ for all $n\in\mathbb{Z}_{>0}$. The arithmetic complexity of rational numbers can be expressed by the height function $H(x)$:
\[
H(x)=\max\{|u|,|v|\},
\]
where $x=\frac{u}{v}$ and $u$ and $v$ are integers without common factors. ($H(0)=0$.)
The map $\phi$ is said to be `Diophantine integrable' if and only if $\log H(x_n)$ grows as slowly as some polynomial.
Thus we can define the arithmetic analogue of algebraic entropy as
\[
\epsilon:=\lim_{n\to\infty} \frac{1}{n}\log\left(\log H(x_n)\right),
\]
if the limit exists. The quantity $\epsilon$ might be called the `Diophantine entropy' of the map. This quantity has been rigorously calculated by A. N. W. Hone for third order recurrence relations \cite{Hone1}. In this paper, he demonstrates that some class of maps are not Diophantine integrable even if they pass singularity confinement test and have Laurent property.
We have to note that the value $\epsilon$ may depend on the choice of initial conditions $x_0, x_1\cdots$.
However in many cases the Diophantine entropy $\epsilon$ should be the same for generic initial conditions as discussed in \cite{Hone2}.
In the case of original algebraic entropy $E$, we can rigorously obtain the recurrence relation for the sequence $d_n$ of degrees of rational functions. The Diophantine entropy $\epsilon$ can be obtained in a similar manner, although it is sometimes not easy to estimate the elimination of common factors $p$ between the numerator and the denominator. 

In this paper we present some numerical examples of the height growth of the discrete dynamical systems:

(i) The Hietarinta-Viallet equation 
\begin{equation}
x_{n+1}+x_{n-1}=x_n+\frac{a}{x_n^2}, \label{HV1}
\end{equation}
It is conjectured to have $\epsilon=\log\left(\frac{3+\sqrt{5}}{2}\right)$, which corresponds to the original algebraic entropy $E=\log\left(\frac{3+\sqrt{5}}{2}\right)$. The value $E$ was first numerically obtained in \cite{HV} and then proved in \cite{Takenawa1}.
We give one example which supports this conjecture for $\epsilon$.
Let us suppose that $p=3$, $(x_0,x_1)=(1,3)$ and that the parameter $a=1$ in the Hietarinta-Viallet equation \eqref{HV1}. Then
\begin{eqnarray*}
\{\mbox{Integer part of}\ \log_3 H(x_n)\}_{n=0}^\infty&=&\{0,1,2,7,19,50,132,347,911,2385,6245,\\
&& 16352,42811,112082,293434,768221,\cdots\cdots\}.
\end{eqnarray*}
We can see that
\[
\left. \frac{\log_3 H(x_{n+1})}{\log_3 H(x_n)}\right|_{n\to\infty}\sim 2.61804 \sim \frac{3+\sqrt{5}}{2}.
\]

(ii) In the case of the equation \eqref{discretemap}, it is conjectured that $\epsilon=\log 3>0$ for $\gamma=3$, while $\epsilon=0$ for $\gamma=1,2$ and $\log H(x_n)$ has a polynomial growth of second degree for generic initial conditions.
These conjectures are completely in line with the well-known results that \eqref{discretemap} is integrable for $\gamma=1,2$, non-integrable for $\gamma=3$, and that the degree of the iterates
grows quadratically in the integrable cases.
We give two numerical examples.
First let us suppose that $p=3$, $\gamma=3$, $(x_0,x_1)=(1,3)$, and that the parameter in the equation \eqref{discretemap} is $a=2$. Then,
\[
\{\mbox{Integer part of}\ \log_3 H(x_n)\}_{n=0}^\infty=\{0,1,3,8,26,79,236,711,2133,6400,19201,\cdots\}.
\]
Therefore we see that
\[
\frac{\log_3 H(x_{n+1})}{\log_3 H(x_n)} \sim 3,\ \ \epsilon\sim \log 3.
\]
On the other hand, if $\gamma=1$, we have
\[
\{\mbox{Integer part of}\ \log_3 H(x_n)\}_{n=0}^\infty=\{0,1,1,1,2,3,3,4,5,7,7,8,9,12,13,14,16,18,\cdots\}.
\]
Therefore we see that
\[
\frac{\log_3 H(x_{n+1})}{\log_3 H(x_n)} \sim 1,\ \ \epsilon\sim 0.
\]
The rate of growth of $\log_3 H(x_n)$ seems to be quadratic:
if we estimate $x_0,\cdots,x_{100}$ using a cubic polynomial, we obtain
\[
\log_3 H(x_n)\sim 0.120+0.185n+0.0454n^2+5\times 10^{-7} n^3,
\]
which indicates a quadratic growth.

Therefore, in the cases given above, the Diophantine entropy $\epsilon$ motivated by the Diophantine integrability is conjectured to be equivalent to the (original) algebraic entropy $E$.
This idea is essentially equivalent to studying the growth of the number of digits in the numerator (or the denominator) of $x_n\in\mathbb{Q}$ when expressed as a fraction of $p$-adic integers. Therefore the procedure can be seen as an analogue of the algebraic entropy of a system over a finite field $\mathbb{F}_p$.
Further investigation into the Diophantine entropy and into its relationship with other integrability criteria is what we wish to address in future works.

\section{Concluding remarks}
In this article we investigated the discrete Painlev\'{e} equations over the finite fields.
Our prescription was to define the system over the non-Archimedean field larger than the finite field $\mathbb{F}_p$, and then to reduce the results to $\mathbb{F}_p$.
In particular, we treated the systems over the field of $p$-adic numbers $\mathbb{Q}_p$.
We reviewed an integrability test (almost good reduction), which is an arithmetic analogue of the singularity confinement test.
We have proved that several types of the $q$-discrete Painlev\'{e} equations have this property.
Thanks to this property, the time evolution of the discrete dynamical system can be well-defined over a finite field.
Finally we have reviewed the Diophantine integrability criterion, and have presented some numerical experiments to demonstrate that it can be seen a $p$-adic analogue of the algebraic entropy.
One of the future problems is to investigate the soliton systems reduced modulo a prime. In particular, the discrete KdV equation, the discrete KP equation and a discrete version of the Schr\"{o}dinger equation over finite fields are under investigation, and the study of them is expected to present new aspects on the theory of ultra-discrete systems and that of cellular automata.  We also wish to solve the initial value problems for these equations by constructing analogues of the tools such as the tropical geometry over the field of $p$-adic numbers.

\section*{Acknowledgement}
The author wish to thank Professors J. Mada, T. Mase, T. Tokihiro and R. Willox for helpful discussions.
This work was partially supported by Grant-in-Aid for JSPS Fellows 24-1379.
%%%%%%%%%%%%%%%%%%%%%%%%%%%%%%%%%%%%%%%%%%%%%%%%%%

\end{document}